\documentclass[a4paper,10pt]{article}
\voffset=-2cm \hoffset=-2cm \textwidth=16cm \textheight=24cm
\usepackage{graphics} 
\usepackage{graphicx,color}
\usepackage[active]{srcltx}
\usepackage{amsmath} 
\usepackage{amssymb}  
\usepackage{amsmath,amssymb,amsfonts,theorem}
\usepackage{enumerate}
\usepackage{tikz}
\usepackage{epstopdf}

{ \theorembodyfont{\normalfont} 





\newcommand{\R}{\mathbb{R}}


\newcommand{\diag}{\operatorname{diag}} 


%
%
\newcommand\ie{\emph{i.e.}}

{

\newtheorem{remark}{Remark}
}
\newtheorem{assumption}{Assumption}
\newtheorem{definition}{Definition}

\newtheorem{lemma}{Lemma}
\newtheorem{corollary}{Corollary}
\newtheorem{proposition}{Proposition}


\def\be{\begin{equation}}
\def\ee{\end{equation}}
\def\ba{\begin{array}}
\def\ea{\end{array}}
\def\eqa{\begin{eqnarray}}
\def\eqe{\end{eqnarray}}
\title{The interconnection of quadratic droop voltage controllers is a Lotka-Volterra system: implications for stability analysis}
\author{Matin Jafarian,
 Henrik Sandberg,
 Karl H. Johansson
\thanks{This work was supported in part by the Knut and Alice Wallenberg Foundation, the Swedish Strategic Research Foundation, the Swedish Research Council, and the Swedish Energy Agency. M. Jafarian, H. Sandberg, K.H. Johansson are with the Automatic Control Department, School of Electrical Engineering, KTH Royal Institute of Technology, Stockholm, Sweden. Email: {\tt\small {matinj@kth.se}, {hsan@kth.se}, {kallej@kth.se}.}}}
\date{}
\begin{document}
\maketitle

\thispagestyle{empty}
\pagestyle{empty}
\begin{abstract}
This paper studies the stability of voltage dynamics for a power network in which nodal voltages are controlled by means of quadratic droop controllers with nonlinear AC reactive power as inputs. We show that the voltage dynamics is a Lotka-Volterra system, which is a class of nonlinear positive systems. We study the stability of the closed-loop system by proving a uniform ultimate boundedness result and investigating conditions under which the network is cooperative. We then restrict to study the stability of voltage dynamics under a decoupling assumption (\ie,\ zero relative angles). We analyze the existence and uniqueness of the equilibrium in the interior of the positive orthant for the system and prove an asymptotic stability result.
\end{abstract}
\section{Introduction}\label{sec:int}
The recent interest in integrating distributed generation in power systems has motivated the design of new control techniques for assuring desired performance, for instance, maintaining appropriate voltage levels. Voltage control in various problem settings have been widely studied in the literature, e.g., \cite{andreasson2017performance,de2017bregman,matincdc2016,
schiffer2014conditions,vasquez2009adaptive} to name a few.
In general, the physical model of electrical power systems can be described using four main variables: active power, reactive power, voltage magnitude and angle. The way these variables are interacting in an AC power network is defined by the (nonlinear) {\em AC power flow} model \cite{kundur1994power}. It follows from this model that voltages and angles depend on both active and reactive power flows. However, most designs for controlling voltage (angle) dynamics rely on a {\em decoupling} assumption where voltage (angle) depends only on the reactive (active) power. A decoupled, local and linearized AC power flow model for lossless power networks is the so-called {\em DC power flow} model which is the assumption behind the design of conventional droop controllers. Recently, a quadratic droop controller was introduced in \cite{jw16} in order to include the quadratic nature of the reactive power flow in a decoupled power flow model for an inductive network. Although the assumption behind designing (quadratic) droop controllers is not the original AC power flow model, studying the use of such controllers with this power flow model, which includes the power losses and does not restrict the size of relative angles, is interesting from both theoretical and practical point of views. A linearized model of a network of quadratic droop controllers whose injected reactive power obeys the AC power flow model was considered in \cite{teixeira2015voltage} where it is shown that the linearized time-invariant system is a stable positive system provided some constraints on the relative angles, controller gain and the power line parameters hold. Positive systems are a class of dynamical systems whose state remain non-negative, if their initial condition is non-negative. The fact that the sign of the voltage magnitude is positive motives studying the voltage dynamics from a positive system perspective.\\[1mm] 
{\bf Main contributions:} This paper considers a power network in which nodal voltages are controlled by means of the quadratic droop controllers and studies the stability within the framework of positive systems. First, we show that interconnected quadratic droop controllers with nonlinear injected reactive power can be represented as a Lotka-Volterra system, which is traditionally studied in mathematical biology.
Second, we investigate the dynamical properties of the network with time-varying voltage angles, droop gains, and references. We prove boundedness of the solutions. Third, we consider the special case where a decoupling assumption holds (\ie,\ zero relative angles) and study the conditions under which the system possesses a unique equilibrium in the interior of the positive orthant. We also provide a Lyapunov-based argument to prove asymptotic stability of the equilibrium.

Compared to previous works (e.g., \cite{de2017bregman,jw16,teixeira2015voltage}), our contribution is to shed a new light on inherent dynamical properties of a network of quadratic droop controllers. Moreover, we analyze the stability of the network from a nonlinear positive system point of view which requires the application of completely different analytical tools.

The paper is organized as follows. Section \ref{sec:pre} presents preliminaries and problem formulation. Section \ref{sec:model} reveals the structure of the nonlinear positive system. Boundedness of the time-varying lossy network and its cooperative property is discussed in Section \ref{sec:ana}. Stability of the network under the decoupling assumption is analyzed in Section \ref{sec:dec}. Section \ref{sec:sim} presents simulation results and Section \ref{sec:con} concludes the paper.\\
{\bf{Notation}}\\
Let $\R_{+}=[0,+\infty)$ and $\R_{+}^{0}=(0,+\infty)$, while $\R^n_{+}$ and ${\rm int}(\R^n_{+})$ are the set of $n$-tuples for which all components belong to $\R_{+}$ and $\R_{+}^{0}$, respectively. The boundary of $\R^{n}_{+}$ is denoted by ${\rm bd}(\R^n_{+})$. The notation $\diag (x)$ is the $n \times n$ diagonal matrix whose entries are the elements of $x \in \R^n$.
\section{Preliminaries and problem formulation}
\subsection{Preliminaries}\label{sec:pre}
Consider the following differential equations 
\be\label{eq:p1}
\dot x(t)= f(x(t)),
\ee
\be\label{eq:p11}
\dot x(t)= F(x(t),t),
\ee
with $x \in R^n$, $f: \R^n \rightarrow \R^n$ , $F:\R^n \times \R \rightarrow \R^n$. The solution of \eqref{eq:p1} or \eqref{eq:p11} at time $t$ with initial condition $(x_0,t_0)$ is denoted by $x(t,t_0,x_0)$ where the equation will be clear from the context. The following definitions are used throughout the paper \cite{angeli2003monotone,de2001stability,peuteman2000boundedness}.
\begin{definition}[Positive systems]\label{def:po}
System \eqref{eq:p1}, \eqref{eq:p11} is positive iff $\R^n_{+}$ is forward invariant.
\end{definition}
\begin{lemma}\label{lem:po}
The following property is a necessary and sufficient condition for positivity of system \eqref{eq:p1},\\ 
\be\label{eq:prop}
\forall x \in {\rm bd}(\R^n_{+}): x_i=0 \Rightarrow f_i(x)\geq 0.
\ee
\end{lemma}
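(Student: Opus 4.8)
The plan is to prove both implications of this classical characterization of positivity. The forward direction is the easy one: assume $\R^n_+$ is forward invariant, and suppose for contradiction that there exist $x \in {\rm bd}(\R^n_+)$ and an index $i$ with $x_i = 0$ but $f_i(x) < 0$. Then along the solution $x(t, 0, x)$ we have $\dot x_i(0) = f_i(x) < 0$, so for small $t > 0$ the $i$-th component becomes strictly negative by a first-order Taylor expansion (using continuity of $f$, hence of $t \mapsto \dot x_i(t)$). This contradicts forward invariance of $\R^n_+$. Hence \eqref{eq:prop} must hold.

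The reverse direction is the substantive part, and it is where I expect the main obstacle to lie. Assume \eqref{eq:prop} holds; I want to show that if $x_0 \in \R^n_+$ then $x(t, 0, x_0) \in \R^n_+$ for all $t$ in the interval of existence. The natural approach is a perturbation/limiting argument: for $\eps > 0$ consider the perturbed vector field $f^\eps(x) = f(x) + \eps \1$ (where $\1$ is the all-ones vector, defined in the excerpt), whose solutions I denote $x^\eps(t)$. On the boundary face $\{x_i = 0\}$ one has $f^\eps_i(x) = f_i(x) + \eps \ge \eps > 0$, so the vector field points strictly into the interior on every boundary face. A standard argument then shows $x^\eps(t)$ stays in $\R^n_+$: if $\tau$ were the first time some component $x^\eps_i$ hit $0$ from inside, then $\dot x^\eps_i(\tau) = f^\eps_i(x^\eps(\tau)) > 0$, contradicting that $x^\eps_i$ was approaching $0$ from above. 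Finally, let $\eps \to 0$: by continuous dependence of solutions on parameters (valid on compact time intervals where solutions exist), $x^\eps(t) \to x(t, 0, x_0)$, and since $\R^n_+$ is closed the limit remains in $\R^n_+$.

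An alternative, and perhaps cleaner, route for the reverse direction avoids the perturbation trick by arguing directly with a ``distance to the boundary'' function or by a careful first-exit-time analysis. Suppose $x_0 \in \R^n_+$ and let $\tau > 0$ be the first time the trajectory leaves $\R^n_+$ (if none exists we are done). Just before $\tau$, pick an index $i$ and a sequence of times $t_k \uparrow \tau$ at which $x_i(t_k) \downarrow 0$ and $x_i$ is about to go negative; then $x_i(\tau) = 0$ with $\dot x_i(\tau) \le 0$, while $x(\tau) \in {\rm bd}(\R^n_+)$ with $x_i(\tau) = 0$ forces $\dot x_i(\tau) = f_i(x(\tau)) \ge 0$ by \eqref{eq:prop}. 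This only gives $\dot x_i(\tau) = 0$, which is not immediately a contradiction, so one must either upgrade the hypothesis locally (e.g., via the perturbation argument above) or invoke a Nagumo-type subtangentiality theorem; for this reason I believe the $\eps$-perturbation argument is the most robust and self-contained, and I would present that one.

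The main obstacle, then, is handling the degenerate case where $\dot x_i = 0$ exactly on the boundary — the inequality in \eqref{eq:prop} is not strict, so a naive ``the velocity points inward'' argument fails. The perturbation-and-limit device is precisely designed to sidestep this: it replaces the weak inequality with a strict one at the cost of an $\eps$, then removes the $\eps$ by continuity. One should be slightly careful to state things on a compact subinterval of the maximal interval of existence so that continuous dependence applies uniformly; this is a routine technical point rather than a genuine difficulty.
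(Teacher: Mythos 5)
The paper states Lemma~\ref{lem:po} as a known fact and supplies no proof of its own (it is imported from the cited references on positive and monotone systems), so there is no in-paper argument to compare against; what matters is whether your proof stands alone, and it does. The forward direction via the first-order Taylor contradiction is fine. For the converse you correctly identify the genuine difficulty --- the inequality $f_i(x)\ge 0$ is not strict, so a naive first-exit-time argument only yields $\dot x_i(\tau)=0$, which is not a contradiction --- and the perturbation $f^\eps=f+\eps\1$ followed by the limit $\eps\to 0$ is exactly the classical Nagumo-type device for resolving it. Two points to tighten when you write it up. First, the bound $f^\eps_i(x)=f_i(x)+\eps\ge\eps>0$ is only guaranteed by \eqref{eq:prop} for $x\in{\rm bd}(\R^n_+)$, i.e.\ on the faces of the orthant, not on the whole hyperplane $\{x_i=0\}$; this suffices because a first exit point of $x^\eps$ from the closed set $\R^n_+$ necessarily lies in ${\rm bd}(\R^n_+)$ with the offending component equal to zero, but you should say this explicitly rather than speak of ``every boundary face'' of $\{x_i=0\}$. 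Second, the limiting step requires $f$ to be locally Lipschitz (or at least that \eqref{eq:p1} has unique solutions): continuous dependence on the parameter $\eps$ identifies $\lim_{\eps\to 0}x^\eps(t)$ with \emph{the} solution $x(t,0,x_0)$ only under uniqueness, whereas with mere continuity of $f$ your argument shows that \emph{some} solution through $x_0$ remains in $\R^n_+$. Since the paper implicitly assumes well-posed dynamics throughout, this is a hypothesis to state rather than a gap, but it belongs in the proof.
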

\begin{definition}\label{def:met}
A matrix $A_{n \times n}$ is Metzler if its off-diagonal entries $a_{i,j} , \forall i \neq j$ are non-negative. Similarly,
$A(t)$ is Metzler if $a_{i,j}(t) , \forall i \neq j$ are non-negative. 
\end{definition}
\begin{definition}\label{def:co}
The map $f(x)$ in \eqref{eq:p1} is cooperative in $\R^n_{+}$ if the Jacobian matrix $\frac{\partial f}{\partial x}$ is Metzler for all $x \in \R^n_{+}$. A similar definition holds for System \eqref{eq:p11} (see Definition 2.2. in \cite{cui2001permanence}).
\end{definition}
\begin{definition}\label{def:hom} Given $r=(r_1,\ldots,r_n), \forall i, r_i >0$, define the dilation map $\delta: \R_{+} \times \R^{n} \rightarrow \R^{n}$ as follows
\be
\delta: (s,x) \rightarrow \delta(s,x)=(s^{r_1} x_1, \ldots, s^{r_n} x_n),
\ee
where $x=(x_1,\ldots,x_n)$. A continuous function $F: \R^n \times \R \rightarrow \R^n$ is $r$-homogeneous of order $\tau \geq 0$ if
\be
\forall x \in \R^n, \forall t \in \R, \forall s \in \R_{+}: F(\delta(s,x),t)= s^{\tau} \delta(s,F(x,t)).
\ee
\end{definition}
\begin{definition}[Uniform boundedness]
System \eqref{eq:p11} is uniformly bounded if $\forall R_1>0$, there exists an $R_2(R_1) > 0$ such that $\forall x_0 \in \R^n, \forall t_0, \forall t \geq t_0$
$$||x_0|| \leq R_1 \ \ \Rightarrow ||x(t,t_0,x_0)|| \leq R_2(R_1).$$
\end{definition}
\begin{definition}[Uniform ultimate boundedness]
System \eqref{eq:p11} is uniformly ultimately bounded if there exists an $R>0$ such that $\forall R_1>0$, there exists a $T(R_1) > 0$ such that $\forall x_0 \in \R^n, \forall t_0,\forall t \geq t_0+T(R_1)$
$$||x_0|| \leq R_1 \ \ \Rightarrow ||x(t,t_0,x_0)|| \leq R.$$
\end{definition}
\begin{definition}[r-homogeneous norm]\label{def:norm}
The r-homogeneous norm $\rho: \R^n \rightarrow \R$ is given by
$$\rho(x)= \sum_{i=1}^{n} |x_i|^{\frac{1}{r_i}}$$
where $0 < r_i <1$.
\end{definition}
\subsection{Problem formulation}\label{sec:pf}
Consider a power network composed of $n$ busbars and $m$ power lines. Let the network be modeled as a connected, undirected graph with $n$ nodes and $m$ edges. The nodal reactive power obeys the AC power flow model \cite{kundur1994power},\ \ie
\begin{equation}\label{eq:ac}
Q_i= -B_{i} V_i^2 + \sum_{j \in {\cal N}_i} (B_{i,j} V_i V_j \cos(\theta_{i,j})-G_{i,j} V_i V_j \sin(\theta_{i,j}),
\end{equation}
where $Q_i$, $V_i$ and $\theta_{i}$ are the reactive power, voltage magnitude and voltage angle of busbar $i$, respectively. Also, ${\cal N}_i$ denotes the set of neighbors of node $i$. The variable $\theta_{i,j}$ is the relative angle, \ie, $\theta_{i,j} := \theta_i-\theta_j$. Variables $G_{i,j} \geq 0$, $B_{i,j} \leq 0$ are the conductance and susceptance of the line $(i,j)$, which connects busbar $i$ to busbar $j$, $G_{i,j} = G_{j,i}$ and $B_{i,j} = B_{j,i}$. Furthermore, $B_{i}= B_i^{sh}+\sum_{j \in {\cal N}_i} B_{i,j}$ where $B_i^{sh}$ denotes the shunt susceptance. Notice that $G_{i,j} \geq 0$, $B_i^{sh} \geq 0$ and $B_{i,j} \leq 0$. It is a common assumption to consider $B_i^{sh} \ll \sum_{j \in {\cal N}_i} |B_{i,j}|$, hence $B_i \leq 0$.
We assume that each node of the network is connected to an inverter, which is modeled as a controllable voltage source \cite{jw16}. We assume that nodal voltages are controlled by means of quadratic droop voltage controllers, designed to incorporate the quadratic nature of reactive power in a conventional droop controller 
as follows
\begin{equation}\label{eq:qdr5}
\tau_i \dot {V_i}= V_i (-k_i (V_i - V_i^\ast))- u_i,
\end{equation}
where $\tau_i > 0, k_i >0, u_i \in \R$, and $V_i^\ast>0 $ are the controller's time constant, droop gain, input, and the nominal voltage of node $i$, respectively. In \cite{jw16}, the control input, $u_i$, is designed to be equal to the nodal reactive power of a simplified power flow model obtained from \eqref{eq:ac} by imposing the decoupling assumption $\theta_{i,j}=0$,
 \ie,
\begin{equation}\label{eq:qdr2}
\tau_i \dot {V_i}= V_i (-k_i (V_i - V_i^\ast))+B_{i} V_i^2 - \sum_{j \in {\cal N}_i} B_{i,j} V_i V_j.
\end{equation}
In this paper, we consider the controller in \eqref{eq:qdr5} and replace $u_i$ with the general AC reactive power flow as in \eqref{eq:ac}. Thus,
\begin{equation}\label{eq:qdr}
\tau_i \dot {V_i}= V_i (-k_i (V_i - V_i^\ast))- Q_i.
\end{equation} 
This paper first considers the controller \eqref{eq:qdr} and study its dynamical properties from a positive system point of view. Second, we study the conditions under which there exists a stable equilibrium in ${\rm int}(\R^n_{+})$ for the network with nodal controllers as in \eqref{eq:qdr2} within the framework of positive systems.
\section{Voltage dynamics as a Lotka-Volterra system}\label{sec:model}
Lotka-Volterra systems are a class of nonlinear positive systems with the dynamics
\begin{equation}\label{eq:kolb}
\dot x= \diag (x) (f(x)+b).
\end{equation}  
where $x \in \R^n$ and $b \in {\rm int}(\R^n_{+})$ \cite{de2001stability}. Now, let us consider a power network with 
each node connected to a quadratic droop controller as introduced in the previous section. We consider the controller \eqref{eq:qdr} which is a generalization compared with \eqref{eq:qdr2} due to the injection of reactive power flow in \eqref{eq:ac}. By replacing $Q_i$ from \eqref{eq:ac} in \eqref{eq:qdr5}, the voltage dynamics of each node is 
\be\begin{aligned}\label{eq:two2}
\tau_i \dot {V_i}= V_i \Bigg[&-k_i (V_i - V_i^\ast)- |B_i| V_i + \sum_{j \in {\cal N}_i} V_j (G_{i,j} \sin \theta_{i,j}+ |B_{i,j}| \cos \theta_{i,j})\Bigg].
\end{aligned}\ee
Notice that $- B_{i,j}$ and $B_i$ in \eqref{eq:ac} are replaced by $|B_{i,j}|$ and $-|B_i|$ in \eqref{eq:two2} since $B_{i,j} \leq 0$ and $B_i < 0$. Now, let us rewrite \eqref{eq:two2} in the form of \eqref{eq:kolb}. We have
\be\begin{aligned}\label{eq:two}
\tau_i \dot {V_i}= V_i \Bigg[&\sum_{j \in {\cal N}_i} V_j (G_{i,j} \sin \theta_{i,j}+ |B_{i,j}| \cos \theta_{i,j})
- (k_i+|B_i|) V_i + k_i V_i^\ast\Bigg].
\end{aligned}\ee
Denote $\sin\theta_{i,j}$, $\cos\theta_{i,j}$ by $\Delta_{i,j}^s$, $\Delta_{i,j}^c$, respectively. Thus, 
$\Delta_{i,j}^s= -\Delta_{j,i}^s$, $\Delta_{i,j}^c = \Delta_{j,i}^c$ and 
$$\Delta_{i,j}^s \in [-1,1],\quad \Delta_{i,j}^c \in [-1,1].$$ Writing the equation in \eqref{eq:two} for all nodes, we obtain
\begin{equation}\begin{aligned}\label{eq:net}
\diag(\tau) \begin{bmatrix} \dot{V_1}\\ \dot{V_2}\\ \vdots \\ \dot{V_n} \end{bmatrix}=
\diag(V)
\Bigg(\begin{bmatrix} f_1(V,\theta)\\ f_2(V,\theta) \\ \vdots \\ f_n(V,\theta) \end{bmatrix}+
\begin{bmatrix} b_1 \\ b_2 \\ \vdots \\ b_n \end{bmatrix}\Bigg), 
\end{aligned}\end{equation}
where $\tau= (\tau_1,\tau_2,\ldots,\tau_n)^T$, $V= (V_1,V_2,\ldots,V_n)^T$, $\dot V=({\dot V}_1,{\dot V}_2,\ldots,{\dot V}_n)^T$,
$b_i=k_i V_i^\ast$, and
$$f_i(V,\theta)= -(|B_i|+k_i) V_i + \sum _{j \in {\cal N}_i} V_{j} (G_{i,j} \Delta_{i,j}^s + |B_{i,j}| \Delta_{i,j}^c).$$
Let us rewrite $f(V,\theta)$ as $f(V,\theta)= \Psi(\theta(t)) V$
where $\Psi(\theta(t))$ is the following matrix
\begin{equation}\label{eq:net2}
{\small{\begin{bmatrix}-(|B_1|+k_1)&\ldots& G_{1,n} \Delta_{1,n}^s+ |B_{1,n}| \Delta_{1,n}^c)
\\ \vdots&\vdots&\vdots
\\ -G_{1,n} \Delta_{1,n}^s+ |B_{1,n}| \Delta_{1,n}^c&\ldots & -(|B_n|+k_n)\end{bmatrix}}}.
\end{equation}
In compact form, the network model is
\begin{equation}\label{eq:netc}
\diag(\tau) \dot V= \diag(V)(\Psi(\theta(t))\;V+b),
\end{equation}
with $b=(k_1 V_1^\ast, \ldots, k_n V_n^\ast)^T$. Matrix $\Psi$ is called the {\em interaction} matrix \cite{zhao2010classification}. 
\begin{proposition}\label{pr1}
System \eqref{eq:netc} is positive. That is, $\forall V(0) \in \R^n_{+}$ and $\forall \theta_{i,j} \in \R$, $V(t) \in \R^n_{+}$.
\end{proposition}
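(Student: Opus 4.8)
The plan is to invoke Lemma~\ref{lem:po}, so it suffices to check the sign condition \eqref{eq:prop} for the vector field of \eqref{eq:netc}. Write the $i$-th component of the right-hand side as $\tfrac{1}{\tau_i} V_i \big[(\Psi(\theta(t))V)_i + b_i\big]$; since $\tau_i>0$, the sign is that of $V_i\big[(\Psi(\theta(t))V)_i + b_i\big]$. Now take any $V \in {\rm bd}(\R^n_{+})$ with $V_i = 0$ for some index $i$. Then the whole $i$-th component of the vector field is multiplied by $V_i = 0$, hence equals $0$, and in particular is $\geq 0$. Strictly speaking one should also remark that \eqref{eq:netc} is in the time-varying form \eqref{eq:p11}, so one wants the corresponding statement for $F(x,t)$; but the factor $\diag(V)$ is time-independent and the same argument applies pointwise in $t$ for every $\theta_{i,j}(t)\in\R$, which is exactly the uniformity claimed in the proposition.

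Concretely I would organize it as: (i) recall that by Definition~\ref{def:po} it is enough to show $\R^n_+$ is forward invariant, and by Lemma~\ref{lem:po} (extended to the time-varying case, as already used implicitly via Definition~\ref{def:co}) this reduces to the nonnegativity-of-components condition on the boundary; (ii) observe that every coordinate of the vector field carries an explicit multiplicative factor $V_i$ coming from $\diag(V)$; (iii) conclude that on the face $\{V_i=0\}$ the $i$-th coordinate of the field vanishes identically, regardless of the values of $k_i,V_i^\ast,|B_i|,G_{i,j},|B_{i,j}|$ and of the angles $\theta_{i,j}(t)$, so condition \eqref{eq:prop} holds and $\R^n_+$ is forward invariant.

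The only genuine subtlety — and the one place I would be slightly more careful — is the existence and uniqueness of solutions needed to even speak of forward invariance: the right-hand side of \eqref{eq:netc} is polynomial in $V$ and (through $\theta(t)$) at worst continuous in $t$, so it is locally Lipschitz in $V$ uniformly on compacts, giving local existence and uniqueness; invariance of $\R^n_+$ then follows from the boundary sign condition by the standard Nagumo-type argument underlying Lemma~\ref{lem:po}. Beyond that, there is no real obstacle: the proposition is essentially immediate from the $\diag(V)$ structure of Lotka--Volterra systems, and indeed this is the standard reason such systems are positive. If one wants the stronger statement that trajectories starting in ${\rm int}(\R^n_+)$ stay there, that also follows since $V_i(t) = V_i(0)\exp\!\big(\tfrac{1}{\tau_i}\int_0^t [(\Psi(\theta(s))V(s))_i + b_i]\,ds\big)$ by scalar integration of each coordinate, but this is not required for the statement as given.
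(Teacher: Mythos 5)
Your proof is correct and rests on the same key observation as the paper's: the factor $\diag(V)$ forces the $i$-th component of the vector field to vanish on the face $\{V_i=0\}$, so $\R^n_+$ is forward invariant. The paper argues this slightly more informally via Definition~\ref{def:po} and a continuity argument rather than citing Lemma~\ref{lem:po} explicitly, but your version (including the remarks on local Lipschitz continuity and the time-varying extension) is just a tidier rendering of the same idea.
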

\begin{proof}
The proof is based on the Definition \ref{def:po}. Consider $V(0) \geq 0$. If there exists $V_i(0)=0$, it is immediate to see that ${\dot V}_i=0$. If $V_i(0)>0$, as the system evolves, ${\dot V}_i$ could be zero, positive or negative. If ${\dot V}_i >0$, $V_i$ grows in $\R^n_{+}$. If ${\dot V}_i=0$, $V_i$ stays in $\R^n_{+}$. If ${\dot V}_i<0$, $V_i$ decreases. Due to the continuity of ${\dot V}_i$ in \eqref{eq:net}, the decrease lead to $V_i=0$, thus $V_i$ cannot decrease further. Hence, $\R^n_{+}$ is forward invariant for \eqref{eq:net} which ends the proof.
\end{proof}
\begin{remark} 
The above is a general result compared with \cite{teixeira2015voltage} which has shown the positivity of the linearized system assuming ${\dot\theta}_{i,j}=0$ and imposing constraints on $\frac{G_{i,j}}{B_{i,j}}$ ratio.
\end{remark}
{\em Properties of Lotka-Volterra systems}\\
A Lotka-Volterra system with interaction matrix $\Psi$ is \cite{zhao2010classification}
\begin{itemize}
\item {\em cooperative (competitive)} if $\Psi_{i,j} \geq 0$ ($\Psi_{i,j} \leq 0$) for all $i \neq j$, (similar to Definition \ref{def:co}),
\item{\em dissipative} if there exists a diagonal matrix $D >0$ such that, $\Psi D \leq 0$, and {\em stably dissipative} if it stays dissipative under small enough perturbation $\delta_i>0$ of its non-zero elements.
\end{itemize}
In cooperative networks, in contrast to competitive networks, agents (nodes) benefit from interacting with each other. Properties of a cooperative system allow us to derive conditions for existence of a unique equilibrium in $\rm{int}(\R^n_{+})$.
Also, inspired by results of competition of ecological species, we envision that voltage drop
could be studied under the competitive system assumption. The latter is under our current investigations and requires further analysis. Dissipativity is useful in studying the convergence behavior for a large scale network specially when the network is heterogeneous. Although the analysis of this paper do not directly rely on this property, in the Section \ref{sec:dec}, we discuss that the network under a decoupling assumption is stably dissipative for the sake of comprehensiveness and future extensions.
\section{Analysis: The case of lossy network}\label{sec:ana}
This Section considers the system in \eqref{eq:netc} with the interaction matrix $\Psi$ in \eqref{eq:net2}. This section assume a lossy network with controller in \eqref{eq:qdr}, \ie\ ${\dot\theta}_{i,j} \neq 0$ and $G_{i,j} \neq 0$. We first assume that ${\dot V}_i^\ast \neq 0$, $\dot k_i \neq 0$, \ie,
\begin{equation}\label{eq:netcT}
\diag(\tau) \dot V= \diag(V) \Psi(\theta(t)) V + \diag(k(t)) V^\ast(t)),
\end{equation}
where $V^\ast(t)=(V^\ast_1(t),\ldots,V^\ast_n(t))^T$. Our aim is to study the boundedness of voltage trajectories in a control-theory sense. We differentiate ultimate boundedness in a control-theory sense from the voltage stability in a power-system sense. The former implies that voltage magnitudes are bounded and ultimately converge to a ball in $\R^n_{+}$ with radius $R$, while the latter requires steady desired bounds \cite{kundur1994power}. This paper studies the boundedness of the closed-loop system without determining the bounds. We also show the usage of tools from the positive systems framework in the analysis of power systems which is interesting from a theoretical point of view. We first allow no restriction on $\theta_{i,j}$ and establish a uniform boundedness result for voltage trajectories. Notice that although variations of $\theta_{i,j}$ depend on voltage magnitudes based on the physical laws, the results of this section are independent of these effects. In fact, the variations of the relative angles will cause variations in $\Delta_{1,2}^{c}$ and $\Delta_{1,2}^{s}$, which are both bounded and take a value in the set $[-1,+1]$, in $\Psi(\theta(t))$ \eqref{eq:netcT}. Thus, without making any specific assumption on the dynamics of $\theta_{i,j}$, we can mathematically model the variations of $\theta_{i,j}$ as a time varying variable which takes a value in $[-1,+1]$.\\[2mm]   
Consider system \eqref{eq:netcT} with the general form 
$$\dot x= f(x(t),t)+g(x(t),t).$$ To study the boundedness of the system, we adopt the approach of \cite{peuteman2000boundedness} allowing us to study the time-invariant `frozen' system $\dot x= f(x(t),\sigma)+g(x(t),\sigma)$, \ie
\begin{equation}\label{eq:netcTf}
\diag(\tau) \dot V= \diag(V) \Psi(\theta(\sigma)) V + \diag(k(\sigma)) V^\ast(\sigma)),
\end{equation}
where $\sigma \in \R$ is treated as a constant parameter. The approach in \cite{peuteman2000boundedness} discusses the stability of homogeneous time-varying systems of a positive order (see Definition \ref{def:hom}) as well as a class of non-homogeneous time-varying systems which possesses a homogeneous approximation when the system state (e.g., $||V||$) is sufficiently large, \ie,\ system \eqref{eq:netcT}. First let us write $\Psi(\theta(\sigma))$ in \eqref{eq:net2} as $\Psi=\Psi^s+\Psi^c$, hence,
\begin{equation}\begin{aligned}\label{eq:net3}
\Psi= \begin{bmatrix} -(|B_1|+k_1)& |B_{1,2}| \Delta_{1,2}^{c,\sigma} & \ldots & |B_{1,n}| \Delta_{1,n}^{c,\sigma} \\ |B_{1,2}|\Delta_{1,2}^{c,\sigma} &-(|B_2|+k_2) & \ldots & |B_{2,n}|\Delta_{2,n}^{c,\sigma} \\ \vdots & \vdots & \cdots & \vdots \\|B_{1,n}| \Delta_{1,n}^{c,\sigma} & |B_{2,n}|\Delta_{2,n}^{c,\sigma} & \ldots & -(|B_n|+k_n)\end{bmatrix}
+\begin{bmatrix} 0 & G_{1,2} \Delta_{1,2}^{s,\sigma} & \ldots & G_{1,n} \Delta_{1,n}^{s,\sigma} \\ -G_{1,2} \Delta_{1,2}^{s,\sigma} & 0 & \ldots & G_{2,n} \Delta_{2,n}^{s,\sigma} \\ \vdots & \vdots & \cdots & \vdots \\ -G_{1,n} \Delta_{1,n}^{s,\sigma} & -G_{2,n} \Delta_{2,n}^{s,\sigma} & \ldots & 0\end{bmatrix},
\end{aligned}\end{equation}
where $\Delta_{i,j}^{c,\sigma}$ is the value of $\Delta_{i,j}^{c}$ at $t=\sigma$ and  $\Delta_{i,j}^{c,\sigma} \in [-1,+1]$ (a similar definition holds for $\Delta_{i,j}^{s,\sigma}$).\\[2mm]
We now prove the asymptotic stability of $\dot V=diag(V) \Psi(\theta(\sigma)) V$. This result is required in the proof of boundedness of the time-varying network \eqref{eq:netcT}.
\begin{proposition}\label{pr2}
If $\forall i: k_i >0$, then $\forall x \in \R^n, x \neq 0$, it holds that $x^T \Psi(\theta(\sigma)) x <0$.
\end{proposition}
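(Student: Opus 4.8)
The plan is to reduce the statement to a negative-definiteness claim for a symmetric matrix by exploiting the splitting $\Psi(\theta(\sigma)) = \Psi^c + \Psi^s$ already displayed in \eqref{eq:net3}. The first summand $\Psi^c$ is symmetric, because $|B_{i,j}| = |B_{j,i}|$ and $\Delta_{i,j}^{c,\sigma} = \Delta_{j,i}^{c,\sigma}$; the second summand $\Psi^s$ is skew-symmetric, because $G_{i,j} = G_{j,i}$ and $\Delta_{i,j}^{s,\sigma} = -\Delta_{j,i}^{s,\sigma}$. Since $x^T M x = 0$ for every skew-symmetric $M$, we get $x^T \Psi(\theta(\sigma)) x = x^T \Psi^c x$ for all $x$, so it suffices to show that $\Psi^c$ is negative definite, i.e. that $-\Psi^c$ is positive definite. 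This is the conceptual heart of the argument: the conductances $G_{i,j}$ (the ``lossy'' part) drop out entirely from the quadratic form, which is why no restriction on $G_{i,j}$ or on the angles is needed.

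To establish $-\Psi^c \succ 0$ I would invoke strict diagonal dominance together with the Gershgorin circle theorem. The $i$-th diagonal entry of $-\Psi^c$ is $|B_i| + k_i$, and the sum of the magnitudes of the off-diagonal entries of row $i$ is $\sum_{j \in {\cal N}_i} |B_{i,j}|\,|\Delta_{i,j}^{c,\sigma}| \le \sum_{j \in {\cal N}_i} |B_{i,j}|$, using $|\Delta_{i,j}^{c,\sigma}| \le 1$. From $B_i = B_i^{sh} + \sum_{j \in {\cal N}_i} B_{i,j}$ with $B_{i,j} \le 0$, and neglecting the small shunt term $B_i^{sh}$ (consistent with the standing assumption in Section~\ref{sec:pf}), one has $|B_i| = \sum_{j \in {\cal N}_i} |B_{i,j}|$, so the $i$-th diagonal entry strictly exceeds the corresponding off-diagonal row sum exactly because $k_i > 0$. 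A real symmetric, strictly row-diagonally-dominant matrix with positive diagonal has all its eigenvalues real and positive (each eigenvalue lies in a Gershgorin disc centred at some $|B_i| + k_i$ of radius strictly less than $|B_i| + k_i$), hence $-\Psi^c \succ 0$ and $x^T \Psi(\theta(\sigma)) x = x^T \Psi^c x < 0$ for every $x \neq 0$. An equivalent, Gershgorin-free route is to bound the form directly: $x^T \Psi^c x = -\sum_i (|B_i| + k_i)x_i^2 + \sum_{i \ne j} |B_{i,j}|\Delta_{i,j}^{c,\sigma} x_i x_j$, and applying $|B_{i,j}|\Delta_{i,j}^{c,\sigma} x_i x_j \le \tfrac12 |B_{i,j}|(x_i^2 + x_j^2)$ with $|B_{i,j}| = |B_{j,i}|$ collapses the off-diagonal sum and leaves $x^T \Psi^c x \le -\sum_i k_i x_i^2 < 0$.

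The one subtle point, and the step I would be most careful about, is the treatment of the shunt susceptance. If $B_i^{sh} > 0$ is not neglected, the same computation yields $x^T \Psi^c x \le \sum_i (B_i^{sh} - k_i) x_i^2$, so strict negativity requires $k_i > B_i^{sh}$ rather than merely $k_i > 0$. Under the paper's standing hypothesis $B_i^{sh} \ll \sum_{j \in {\cal N}_i} |B_{i,j}|$ this is precisely the intended regime, and to make the proof airtight I would either state the hypothesis of the proposition as $k_i > B_i^{sh}$, or explicitly note at the outset of the proof that $B_i^{sh}$ is taken to be zero (equivalently, absorbed so that $|B_i| = \sum_{j \in {\cal N}_i}|B_{i,j}|$), which is the form in which the inductive power-flow model is usually analysed.
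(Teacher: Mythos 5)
Your proof is correct and follows essentially the same route as the paper: split $\Psi=\Psi^c+\Psi^s$ as in \eqref{eq:net3}, note that the skew-symmetric part contributes nothing to the quadratic form, and establish negative definiteness of $\Psi^c$ by Gershgorin/diagonal dominance (your direct bound via $|x_i x_j|\le\tfrac12(x_i^2+x_j^2)$ is an equivalent, slightly more self-contained variant). One remark: your ``subtle point'' about the shunt term is not merely a stylistic caution --- it identifies a genuine slip in the paper's own proof, which asserts $|B_i|=B_i^{sh}+\sum_{j\in\mathcal{N}_i}|B_{i,j}|$, whereas the sign conventions of Section~\ref{sec:pf} ($B_i^{sh}\ge 0$, $B_{i,j}\le 0$, $B_i\le 0$) give $|B_i|=\sum_{j\in\mathcal{N}_i}|B_{i,j}|-B_i^{sh}$; with $B_i^{sh}>0$ the correct sufficient condition is $k_i>B_i^{sh}$ (and a $2\times 2$ example with $k_i\le B_i^{sh}$ shows $\Psi^c$ can fail to be negative definite), so your proposed fix --- either take $B_i^{sh}=0$ or strengthen the hypothesis to $k_i>B_i^{sh}$ --- is exactly right.
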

\begin{proof}
Consider \eqref{eq:net3}. Observe that $\Psi^s$ is skew-symmetric. If $\Psi^c$ is negative definite, then $\Psi$ is Hurwitz and $x^T \Psi(\theta(\sigma)) x <0$. Applying the Gershgorin Circle Theorem \cite{varga2010gervsgorin}, a sufficient condition for $\Psi^c$ to be negative definite is that 
$$\forall i \in \{1,\ldots,n\}: |B_i|+k_i > \sum_{j \in {\cal N}_i} |B_{i,j} \Delta_{i,j}^{c,\sigma}|.$$
Recall that $|B_i|= B_i^{sh}+ \sum_{j \in {\cal N}_i} |B_{i,j}|$ and $\Delta_{i,j}^{c,\sigma} \in [-1,+1]$. Hence, the above is satisfied if $k_i >0$. 
\end{proof}
\begin{proposition}\label{pr3}
System $\dot V=diag(V) \Psi(\theta(\sigma)) V$ is positive and asymptotically stable at the origin.
\end{proposition}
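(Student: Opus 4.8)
The statement has two parts, and I would dispatch them in turn, since both follow quickly from material already in hand. \emph{Positivity} is immediate from Lemma~\ref{lem:po}: for $V\in\mathrm{bd}(\R^n_+)$ with $V_i=0$, the $i$-th component of $\diag(V)\,\Psi(\theta(\sigma))\,V$ equals $V_i\bigl(\Psi(\theta(\sigma))V\bigr)_i=0\ge 0$, so $\R^n_+$ is forward invariant; equivalently this is the $b=0$ case of Proposition~\ref{pr1}.

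For \emph{asymptotic stability} at the origin, I would use the linear candidate $W(V)=\mathbf{1}^T V=\sum_{i=1}^n V_i$, the natural Lyapunov function for a Lotka--Volterra system $\dot V=\diag(V)(\Psi V)$. Restricted to the forward-invariant set $\R^n_+$, $W$ is $C^1$, positive definite (it vanishes only at the origin and is strictly positive on $\R^n_+\setminus\{0\}$) and radially unbounded (there it coincides with $\|V\|_1$). The key step is that its derivative along the dynamics collapses to the quadratic form associated with $\Psi$:
\begin{equation*}
\dot W \;=\; \mathbf{1}^T\dot V \;=\; \sum_{i=1}^n V_i\,\bigl(\Psi(\theta(\sigma))V\bigr)_i \;=\; V^T\,\Psi(\theta(\sigma))\,V,
\end{equation*}
which by Proposition~\ref{pr2} is strictly negative for every $V\neq 0$, in particular for every $V\in\R^n_+\setminus\{0\}$. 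Since trajectories issued from $\R^n_+$ remain in $\R^n_+$ by the first part, the standard Lyapunov theorem applied on this invariant set gives asymptotic stability of the origin, in fact global asymptotic stability relative to $\R^n_+$ (radial unboundedness of $W$ on $\R^n_+$ additionally rules out finite escape times).

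The only point I expect to require care is that $W$ is \emph{not} positive definite on all of $\R^n$, so the argument genuinely relies on confining the motion to $\R^n_+$; I would therefore state explicitly that ``asymptotic stability at the origin'' is understood with respect to initial conditions in the physically relevant orthant $\R^n_+$, consistently with Proposition~\ref{pr1}. As a (redundant) consistency check one may note that $\dot V=\diag(V)\Psi V$ has no equilibrium in $\R^n_+$ other than the origin, since a nonzero $V^\ast$ would force $\dot W(V^\ast)=(V^\ast)^T\Psi V^\ast<0$ whereas an equilibrium requires $\dot W(V^\ast)=0$; but this is already subsumed by the Lyapunov argument.
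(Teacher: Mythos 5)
Your proof is correct, and it rests on the same key ingredient as the paper's: the strict negativity of the quadratic form $V^T\Psi(\theta(\sigma))V$ from Proposition~\ref{pr2}. The difference is in the packaging of the Lyapunov function. The paper takes ${\cal V}=\sum_i |V_i|$ on all of $\R^n$, which is nonsmooth where components vanish, and therefore invokes the Clarke generalized gradient, the set-valued derivative, and the nonsmooth La~Salle invariance principle to conclude; the computation there likewise reduces to $V^T\Psi V\le 0$. You instead restrict attention to the forward-invariant orthant $\R^n_+$, where the same function coincides with the linear functional $\mathbf{1}^TV$, and apply the classical smooth Lyapunov theorem. Your route is more elementary and avoids the nonsmooth machinery entirely; the price, which you correctly identify and pay, is that positive definiteness of $W$ fails off $\R^n_+$, so the conclusion must be stated as asymptotic stability relative to $\R^n_+$ --- which is exactly the physically relevant notion here and is also, implicitly, all the paper's proof establishes (its Clarke gradient is only specified for $V_i\ge 0$). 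A further small gain on your side: since $\dot W<0$ strictly away from the origin, you get the conclusion from the direct Lyapunov theorem without needing an invariance-principle argument at all.
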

\begin{proof}
From Lemma \ref{lem:po}, it is immediate to see that system $\dot V=diag(V) \Psi(\theta(\sigma)) V$ is positive. Take ${\cal V}= \sum_{i} |V_i|$ (where $|.|$ is the absolute value) as the Lyapunov candidate. Since ${\cal V}$ is not differentiable at the origin, we use tools from the nonsmooth theory, \ie\ the Clarke generalized gradient and set-valued derivative in order to calculate $\dot {\cal V}$ (see for example \cite{bacciotti1999stability}). Define the Clarke generalized gradient as follows
\be\partial {\cal V}= \{ p^V \quad{\rm s.t.}\quad  p^V_i \in \left\{\ba{lll} +1 &if&  V_i>0,\\ \left[-1,+1\right] &if& V_i=0 \ea\right.\}.\ee
The set-valued derivative is then obtained from ${\cal {\dot{\bar V}}}= \{a \in \R: a=\langle\dot V, p^V\rangle, \forall p^V \in \partial{\cal V}\}$ where $\langle,\rangle$ is the inner product. 
Since for $V_i=0$, it holds that ${\dot V}_i=0$, we obtain ${\cal {\dot{\bar V}}}=\{V^T \Psi(\theta(\sigma)) V\}$. Based on Proposition \eqref{pr2}, ${\cal {\dot{\bar V}}} \subseteq (-\infty,0]$. Applying (nonsmooth) La Salle's invariance principle \cite{bacciotti1999stability,jafarian2015formation}, the system is asymptotically stable at the origin.
\end{proof}
Now, we continue with proving uniform ultimate boundedness of system \eqref{eq:netcT}. 
\begin{assumption}\label{assT}
For system \eqref{eq:netcT},\\
1- there exists $c_k > 0$ such that for all $\sigma \in \R$ and for all $i$, $0<k_i (\sigma)<c_k$ holds, (boundedness of droop gains)\\
2- there exists $c_r > 0$ such that for all $\sigma \in \R$ and for all $i$, $|k_i (\sigma) V^\ast_i(\sigma)|<c_r$ holds (boundedness of references).
\end{assumption}
\begin{proposition}\label{pr4}
If Assumption \ref{assT} holds, then the time-varying system \eqref{eq:netcT} is uniformly and uniformly ultimately bounded.
\end{proposition}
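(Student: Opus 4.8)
The plan is to show that a suitable scalar comparison inequality forces all trajectories into a fixed ball, which is the concrete mechanism behind the homogeneous‑approximation approach of \cite{peuteman2000boundedness}. By Proposition \ref{pr1} the solution of \eqref{eq:netcT} stays in $\R^n_{+}$ for every $V(0)\in\R^n_{+}$, so I would work on $\R^n_{+}$ throughout and use the smooth, proper function $W(V)=\sum_{i=1}^{n}\tau_i V_i$, which satisfies $\tau_{\min}\|V\|_1\le W(V)\le\tau_{\max}\|V\|_1$ with $\tau_{\min}=\min_i\tau_i$, $\tau_{\max}=\max_i\tau_i$; no absolute values or nonsmooth analysis are needed here because $V\ge 0$ along the flow. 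I would split \eqref{eq:netcT} as $\diag(\tau)\dot V=\diag(V)\Psi(\theta(t))V+\diag(k(t))V^\ast(t)$ and record two structural facts: the first term is homogeneous of order one in the sense of Definition \ref{def:hom} (dilation $\delta(s,V)=sV$), while the second is uniformly bounded, since by Assumption \ref{assT}.2, $\|\diag(k(t))V^\ast(t)\|_1=\sum_i|k_i(t)V_i^\ast(t)|<n c_r$ for all $t$. Differentiating $W$ along \eqref{eq:netcT},
$$\dot W=\1^{T}\diag(\tau)\dot V=\1^{T}\diag(V)\Psi(\theta(t))V+\1^{T}\diag(k(t))V^\ast(t)=V^{T}\Psi(\theta(t))V+\textstyle\sum_i k_i(t)V_i^\ast(t).$$

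Next I would lower‑bound the quadratic term uniformly in $t$, reusing the argument of Proposition \ref{pr2}. Writing $\Psi=\Psi^s+\Psi^c$ with $\Psi^s$ skew‑symmetric (so $V^T\Psi^s V=0$) and $\Psi^c$ symmetric, the Gershgorin estimate there is uniform in $t$ because every $\Delta^{c,\sigma}_{i,j}\in[-1,+1]$; it places each Gershgorin disk of $\Psi^c(\theta(t))$ inside $(-\infty,\,-(B_i^{sh}+k_i(t))]$, hence $\Psi^c(\theta(t))\preceq-\mathrm{diag}_i\big(B_i^{sh}+k_i(t)\big)\preceq -c\,I$ with $c:=\inf_{t,i}\big(B_i^{sh}+k_i(t)\big)$. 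Assuming $c>0$, one gets $V^{T}\Psi(\theta(t))V\le-c\|V\|_2^2$ for all $t$; combining this with $\|V\|_2^2\ge\|V\|_1^2/n\ge W(V)^2/(n\tau_{\max}^2)$ and $\sum_i k_i(t)V_i^\ast(t)<nc_r$ yields the scalar differential inequality
$$\dot W\le-\frac{c}{n\tau_{\max}^{2}}\,W^{2}+n c_r=:-aW^{2}+b,\qquad a,b>0.$$

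To finish, I would compare $W$ with the scalar dynamics $\dot v=-av^{2}+b$. Since $\dot W<0$ whenever $W>\sqrt{b/a}$, every sublevel set $\{W\le\mu\}$ with $\mu\ge\sqrt{b/a}$ is forward invariant, so $W(t)\le\max\{W(t_0),\sqrt{b/a}\}$ for all $t\ge t_0$; translating via $\|V\|_1\le W/\tau_{\min}$ gives uniform boundedness, with the bound depending only on $\|V(t_0)\|$ and not on $t_0$. For uniform ultimate boundedness, the solutions of $\dot v=-av^{2}+b$ with $v(t_0)=W(t_0)$ decrease monotonically to $\sqrt{b/a}$ in a time $T(R_1)$ that can be written explicitly and depends only on an a priori bound $R_1$ on $\|V(t_0)\|$; hence $W(t)\le 2\sqrt{b/a}$ for $t\ge t_0+T(R_1)$, which is the required ball of radius $R=2\sqrt{b/a}/\tau_{\min}$ in $\|V\|_1$. (Equivalently, one may simply invoke the uniform‑ultimate‑boundedness theorem of \cite{peuteman2000boundedness}: the homogeneous part has order $\tau=1>0$, the remainder is dominated by it for large $\|V\|$, and the uniform asymptotic stability of the frozen homogeneous system $\dot V=\diag(V)\Psi(\theta(\sigma))V$ is exactly Propositions \ref{pr2}–\ref{pr3} made uniform by the Gershgorin estimate above.)

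The main obstacle — and essentially the only non‑routine point — is guaranteeing $c>0$: Assumption \ref{assT} as literally stated provides only pointwise positivity $k_i(t)>0$ together with the upper bound $c_k$, whereas both the comparison argument and the cited frozen‑system approach require the uniform lower bound $\inf_{t,i}\big(B_i^{sh}+k_i(t)\big)>0$ (a mild strengthening of Assumption \ref{assT}.1, automatic when every $B_i^{sh}>0$). Everything else — the homogeneity bookkeeping, the uniform bound on the reference term from Assumption \ref{assT}.2, and the one‑dimensional comparison — is straightforward.
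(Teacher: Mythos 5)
Your argument is correct in outline but takes a genuinely different, more elementary route than the paper. The paper proves Proposition \ref{pr4} by verifying, one by one, the hypotheses of Theorem 4.1 of the Peuteman--Aeyels boundedness result: homogeneity of positive order of $f_H(V,t)=\diag(V)\Psi(\theta(t))V$, uniform bounds on $f_H$ and its partial derivatives on the unit sphere of the $r$-homogeneous norm, asymptotic stability of each frozen system (which is exactly Proposition \ref{pr3}, proved there with the nonsmooth Lyapunov function $\sum_i|V_i|$), and the decay condition on the perturbation $\diag(k(t))V^\ast(t)$ via $F(s)=\sqrt{n}c_r/s^r$. You instead exploit positivity (Proposition \ref{pr1}) to restrict to $\R^n_+$, where the smooth linear function $W(V)=\sum_i\tau_iV_i$ works directly, and you convert the same two ingredients the paper uses --- the Gershgorin estimate of Proposition \ref{pr2} and the bound $|k_iV_i^\ast|<c_r$ of Assumption \ref{assT}.2 --- into the scalar comparison $\dot W\le -aW^2+b$, from which uniform boundedness and uniform ultimate boundedness follow by a one-dimensional argument with explicit constants. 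What your route buys is self-containedness and an explicit ultimate bound $R$; what the paper's route buys is that it does not need the trajectory to stay in $\R^n_+$ and it fits the system into a general homogeneous-approximation framework. One caveat: the obstacle you flag is real. Assumption \ref{assT}.1 gives only $0<k_i(t)<c_k$ pointwise, so $\inf_{t,i}\bigl(B_i^{sh}+k_i(t)\bigr)$ may be zero (the paper even sets $B_i^{sh}=0$ in the simulations), and without a uniform lower bound your constant $a$ degenerates. You should state this strengthened hypothesis explicitly; note, however, that the paper's own proof has the same latent issue, since the frozen systems must be asymptotically stable uniformly in $\sigma$ for the cited theorem to apply, and Proposition \ref{pr2}'s strict inequality likewise degenerates as $k_i(\sigma)\to 0$. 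So this is not a defect of your approach relative to the paper's; it is a shared gap that your write-up makes visible.
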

\begin{proof}
The proof is based on Theorem 4.1 of \cite{peuteman2000boundedness}, which is an extension of Theorem 3.2, of \cite{peuteman2000boundedness}. Based on Theorem 4.1 \cite{peuteman2000boundedness}, the following conditions should hold for $f_H(V,t)= diag (V)(\Psi^s(t) + \Psi^c(t)) V$,\\
\begin{itemize}
\item $f_H(V,t)$ is homogeneous of order $\tau >0$: based on the Definition \ref{def:hom}, let us take $\delta^r_{\lambda}(V)=(\lambda^{r} V_1,\ldots,\lambda^{r} V_n)^T$, then $f_H(V,t)$ is r-homogeneous of order $\tau=r>0$,
\item $f_H(V,\sigma)$ is continuously differentiable with respect to $V$ and $\sigma$: this clearly holds,
\item there exists a $c_f > 0$ such that for all $\sigma \in \R$, for all $y \in \R^n$ with $\rho(y)=1$ (see Definition \ref{def:norm}), and $\forall i,k$, the following hold\\ $|f_H^i(y,\sigma)| \leq c_f$, $|\frac{\partial f_H^i}{\partial x_k}(y,\sigma)| \leq c_f$, $|\frac{\partial f_H^i}{\partial \sigma}(y,\sigma)| \leq c_f$. Considering Assumption \ref{assT}, the above conditions are satisfied since all elements of $\Psi^s(\sigma)$ and $\Psi^c(\sigma)$ are bounded,
\item each frozen system $\dot V=f_H(V,\sigma)$ is asymptotically stable at the origin:
this holds based on Proposition \ref{pr3},
\item there exists an $R_g >0$ and a continuous nonincreasing function $F:\R_{+} \rightarrow \R$ with $\lim _{s \rightarrow \infty} F(s)=0$ such that for all $V \in \R^n$ with $\rho(V) > R_g$ and $\forall t \in \R$, $$||\delta^r_{\rho(V)^{-1}}(\diag(V) \diag(k(t)) V^\ast(t))|| \leq \rho(V)^\tau F(\rho(V)).$$
To fulfill the above, that is the condition 4.1 of \cite{peuteman2000boundedness}, take $F(s)= \frac{\sqrt{n} c_r}{s^r}$ \cite{peuteman2000boundedness}, where $c_r$ is the upper bound of $k_i(t) V^\ast_i(t)$ by Assumption \ref{assT}. Based on the definitions of $\delta$ and $\rho$ (see Preliminaries), this last condition is also satisfied which ends the proof.
\end{itemize}
\end{proof}
Now, consider the system in \eqref{eq:netcT} assuming ${\dot V}_i^\ast=0$, $\dot k_i=0$ which gives the system in \eqref{eq:netc}. We conclude the ultimate boundedness of \eqref{eq:netc} based on the above proposition.
\begin{corollary}\label{cor1}
If $k_i >0, V_i^\ast>0, b_i <c_f$, then the system \eqref{eq:netc} is uniformly and uniformly ultimately bounded.
\end{corollary}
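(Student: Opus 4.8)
The plan is to derive Corollary \ref{cor1} as a direct specialization of Proposition \ref{pr4}. The time-varying system \eqref{eq:netcT} reduces to the autonomous system \eqref{eq:netc} precisely when $\dot V_i^\ast = 0$ and $\dot k_i = 0$ for all $i$, i.e.\ when $k_i(\sigma) \equiv k_i$ and $V_i^\ast(\sigma) \equiv V_i^\ast$ are constants. So the first step is simply to observe that \eqref{eq:netc} is a particular instance of \eqref{eq:netcT}, and that Proposition \ref{pr4} applies to it provided Assumption \ref{assT} is met.

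Next I would verify Assumption \ref{assT} under the stated hypotheses. Part 1 asks for a uniform upper bound $c_k > 0$ on the droop gains; since the $k_i$ are now fixed positive constants, one may simply take $c_k > \max_i k_i$, and positivity is guaranteed by the hypothesis $k_i > 0$. Part 2 asks for a uniform bound $c_r > 0$ on $|k_i(\sigma) V_i^\ast(\sigma)| = |k_i V_i^\ast| = |b_i|$; since $b_i = k_i V_i^\ast$ is a finite constant for each $i$ (and here positive, as $k_i, V_i^\ast > 0$), one takes $c_r > \max_i b_i$. The hypothesis $b_i < c_f$ in the statement plays the role of this uniform bound, tying into the constant $c_f$ appearing in the boundedness conditions of Proposition \ref{pr4}. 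Thus both parts of Assumption \ref{assT} hold trivially in the autonomous case.

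With Assumption \ref{assT} in force, Proposition \ref{pr4} immediately yields that \eqref{eq:netcT}, and hence its special case \eqref{eq:netc}, is uniformly bounded and uniformly ultimately bounded. That completes the argument. I would close with a one-line remark that, in the autonomous setting, the ``frozen'' system of \cite{peuteman2000boundedness} coincides with the original system, so the homogeneous-approximation machinery of Proposition \ref{pr4} specializes cleanly; the asymptotic stability of the homogeneous part $\dot V = \diag(V)\Psi(\theta)V$ is exactly Proposition \ref{pr3}.

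There is essentially no obstacle here: the corollary is a corollary in the genuine sense, and the only thing to be careful about is bookkeeping — making sure the constants ($c_k$, $c_r$, $c_f$, $b_i$) are matched up consistently with how they enter Proposition \ref{pr4} and Theorem 4.1 of \cite{peuteman2000boundedness}, and noting explicitly that constancy of $k_i$ and $V_i^\ast$ makes the derivative-bound conditions $|\partial f_H^i/\partial\sigma| \le c_f$ and the decay condition on $F$ automatic. The proof is therefore short: invoke Proposition \ref{pr4} after checking Assumption \ref{assT}.
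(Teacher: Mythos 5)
Your argument is correct and is exactly the route the paper takes: the paper offers no separate proof of Corollary \ref{cor1}, simply noting that setting $\dot k_i = 0$ and $\dot V_i^\ast = 0$ reduces \eqref{eq:netcT} to \eqref{eq:netc} and invoking Proposition \ref{pr4}. Your explicit verification that Assumption \ref{assT} holds trivially for constant $k_i$ and $V_i^\ast$ is just a more careful write-up of the same specialization.
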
 
Next, we assume boundedness of $\theta_{i,j}$ and verify the conditions under which system \eqref{eq:netc} is cooperative. This property allows us to derive conditions under which all voltage trajectories will converge to a ball in the interior of the positive orthant \ie \ away from zero. 
\begin{assumption}\label{ass3}
The relative voltage angles are bounded, e.g. $\theta_{i,j} \in [-\beta, \beta]$ for some constant $\beta$.
\end{assumption}
\begin{proposition}\label{pr5}
If Assumption \ref{ass3} holds and $\forall i,j: |\frac{G_{i,j}}{B_{i,j}}| < |\cot(\theta_{i,j})|$, then system \eqref{eq:netc} is cooperative.
\end{proposition}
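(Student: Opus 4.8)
The plan is to verify the Metzler condition of Definition~\ref{def:co} directly for the Jacobian of the right-hand side of \eqref{eq:netc}. Writing the $i$-th component of the vector field as $\frac{1}{\tau_i} V_i\big(\sum_{j \in {\cal N}_i} V_j(G_{i,j}\Delta_{i,j}^s + |B_{i,j}|\Delta_{i,j}^c) - (k_i+|B_i|)V_i + k_iV_i^\ast\big)$, the off-diagonal partial derivative with respect to $V_\ell$ (for $\ell \neq i$) is $\frac{1}{\tau_i}V_i(G_{i,\ell}\Delta_{i,\ell}^s + |B_{i,\ell}|\Delta_{i,\ell}^c)$ when $\ell \in {\cal N}_i$, and $0$ otherwise. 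Since $V_i \ge 0$ on $\R^n_+$ and $\tau_i>0$, the sign of this entry is governed by the sign of $G_{i,\ell}\Delta_{i,\ell}^s + |B_{i,\ell}|\Delta_{i,\ell}^c = G_{i,\ell}\sin\theta_{i,\ell} + |B_{i,\ell}|\cos\theta_{i,\ell}$.

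First I would reduce the problem to showing this scalar quantity is nonnegative for every edge $(i,\ell)$. Under Assumption~\ref{ass3} the angles lie in $[-\beta,\beta]$, and the hypothesis $|G_{i,j}/B_{i,j}| < |\cot\theta_{i,j}|$ is exactly what is needed. The key step is the case analysis on the sign of $\theta_{i,\ell}$ (equivalently, of $\sin\theta_{i,\ell}$): if $\sin\theta_{i,\ell} \ge 0$ then both terms $G_{i,\ell}\sin\theta_{i,\ell}$ and $|B_{i,\ell}|\cos\theta_{i,\ell}$ are nonnegative provided $\cos\theta_{i,\ell}\ge 0$, so one wants $\beta < \pi/2$ (implicitly assumed, since $\cot$ is being used and $|B_{i,j}|>0$); if $\sin\theta_{i,\ell} < 0$, then one needs $|B_{i,\ell}|\cos\theta_{i,\ell} \ge G_{i,\ell}|\sin\theta_{i,\ell}|$, i.e. $|B_{i,\ell}|\cos\theta_{i,\ell} \ge G_{i,\ell}(-\sin\theta_{i,\ell})$, which on dividing by $|B_{i,\ell}|\sin\theta_{i,\ell}$ (negative) and rearranging is precisely $G_{i,\ell}/|B_{i,\ell}| \le \cot(-\theta_{i,\ell}) = |\cot\theta_{i,\ell}|$ — and this is the stated hypothesis. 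Since $G_{i,\ell}\ge 0$ and $\Delta_{i,\ell}^s = -\Delta_{\ell,i}^s$, the argument must be run for both orientations, but the hypothesis is symmetric in $(i,j)$ up to the sign flip in $\sin$, so one orientation always lands in the "easy" case and the other in the "bound-constrained" case; the condition $|\cot\theta_{i,j}|$ with absolute value handles both at once.

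Having shown $G_{i,\ell}\sin\theta_{i,\ell} + |B_{i,\ell}|\cos\theta_{i,\ell} \ge 0$ for all edges, I would conclude that every off-diagonal entry of $\partial f/\partial V$ is nonnegative on $\R^n_+$, so the Jacobian is Metzler there, and hence by Definition~\ref{def:co} the system \eqref{eq:netc} is cooperative. The main obstacle — really a matter of care rather than depth — is the sign bookkeeping: keeping track of which of $\Delta_{i,\ell}^s$, $\Delta_{\ell,i}^s$ one is differentiating against, ensuring $\cos\theta_{i,\ell}>0$ (so that the $\cot$ comparison is meaningful and the "easy" case is genuinely easy), and confirming that dividing the inequality by a negative quantity flips it in exactly the way that reproduces the hypothesis as stated. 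I would also remark that the diagonal entries play no role in the Metzler property, so no further estimate on $k_i+|B_i|$ is needed here.
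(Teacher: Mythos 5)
Your proof is correct and takes essentially the same route as the paper: the paper likewise reduces cooperativity to the Metzler property of the interaction matrix $\Psi$, i.e.\ nonnegativity of $|B_{i,j}|\Delta_{i,j}^c \pm G_{i,j}|\Delta_{i,j}^s|$ for both orientations of each edge, which is exactly your Jacobian computation since the off-diagonal Jacobian entries are $V_i\Psi_{i,\ell}/\tau_i$ with $V_i\ge 0$ on $\R^n_{+}$. Your explicit remark that one needs $\cos\theta_{i,j}>0$ (i.e.\ $\beta<\pi/2$) for the cotangent comparison to be meaningful is a point the paper leaves implicit, but otherwise the arguments coincide.
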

\begin{proof} 
Based on Definition \ref{def:co} (and Definition 2.2. in \cite{cui2001permanence}), system \eqref{eq:netc} is cooperative if the interaction matrix $\Psi$ is Metzler (see Definition \ref{def:met}). To satisfy this condition, both $|B_{i,j}| \Delta_{i,j}^c - G_{i,j} |\Delta_{i,j}^s|$ and $|B_{i,j}| \Delta_{i,j}^c + G_{i,j} |\Delta_{i,j}^s|$ should be non-negative. That is $|\frac{G_{i,j}}{B{i,j}}| \leq |\cot(\theta_{i,j})|$. 
\end{proof}
\vspace{5mm}
To interpret the above result, consider an example where $\frac{G_{i,j}}{B{i,j}} \leq 1$. The above result implies that system \eqref{eq:netcT} is cooperative if $\theta_{i,j}(t) \in [-\frac{\pi}{4},\frac{\pi}{4}]$.

\begin{remark}
The result in Proposition \ref{pr5} restricts the variation of voltage angles based on $\frac{G_{i,j}}{B_{i,j}}$ ratio of power lines. One potential solution to relax this restriction is to consider the combination of both active and reactive power, e.g. $P_i+Q_i$, as the control input. Studying this possible extension is among our future avenues.
\end{remark}
\section{Analysis: The case of decoupled power flow}\label{sec:dec}
In this section, we present stability results for system \eqref{eq:netc} assuming a decoupled power flow model such that ${\theta}_{i,j}=0$. The latter is the assumption behind the design of the controller in \eqref{eq:qdr2} \cite{jw16}. We also, assume that $\dot k_i=0$, ${\dot V}^\ast_i=0$. Without loss of generality, we take $\diag(\tau)$ as an identity matrix. The network model in this case is
\begin{equation}\label{eq:netc2}
\dot V= \diag(V)(\Psi^\ell\;V+b),
\end{equation}
where the interaction matrix $\Psi^\ell$ is as follows
\begin{equation}\begin{aligned}\label{eq:net4}
\Psi^\ell= &\begin{bmatrix} -(|B_1|+k_1)& |B_{1,2}| & \ldots & |B_{1,n}|  \\ |B_{1,2}| &-(|B_2|+k_2) & \ldots & |B_{2,n}| \\ \vdots & \vdots & \cdots & \vdots \\|B_{1,n}| & |B_{2,n}|& \ldots & -(|B_n|+k_n)\end{bmatrix}.
\end{aligned}\end{equation}
\begin{proposition}\label{pr6}
If $\forall i: k_i >0$, then matrix $\Psi^\ell$ in \eqref{eq:net4} is negative definite.
\end{proposition}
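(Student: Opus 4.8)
The plan is to recognize that $\Psi^\ell$ is precisely the interaction matrix $\Psi(\theta(\sigma))$ of \eqref{eq:net3} specialized to $\theta_{i,j}=0$, and then either invoke Proposition~\ref{pr2} directly or give a short self-contained Laplacian argument. Indeed, when $\theta_{i,j}=0$ we have $\Delta_{i,j}^{s}=0$, so the skew-symmetric part $\Psi^s$ vanishes, while $\Delta_{i,j}^{c}=1$, so the cosine part equals $\Psi^\ell$. Since $\Psi^\ell$ is symmetric, negative definiteness is exactly the statement that $x^T\Psi^\ell x<0$ for all $x\neq0$, which is what Proposition~\ref{pr2} (with $\Psi^s=0$) already gives; more precisely, the Gershgorin argument in the proof of Proposition~\ref{pr2} applies verbatim to $\Psi^\ell$ itself.

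Concretely, I would argue as follows. The matrix $\Psi^\ell$ is real symmetric, hence has real eigenvalues, and by the Gershgorin Circle Theorem every eigenvalue lies in some disc centered at the diagonal entry $-(|B_i|+k_i)$ with radius $\sum_{j\in{\cal N}_i}|B_{i,j}|$ (the sum of the absolute values of the off-diagonal entries in row $i$, all of which are $|B_{i,j}|\ge0$). The rightmost point of this disc is
\[
-(|B_i|+k_i)+\sum_{j\in{\cal N}_i}|B_{i,j}| \;=\; -(B_i^{sh}+k_i),
\]
using $|B_i|=B_i^{sh}+\sum_{j\in{\cal N}_i}|B_{i,j}|$. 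Since $B_i^{sh}\ge0$ and $k_i>0$, this is strictly negative for every $i$, so all eigenvalues of $\Psi^\ell$ are negative and $\Psi^\ell$ is negative definite.

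As an alternative (and more structural) route, I would write $\Psi^\ell=-L-\diag(\beta)$, where $L$ is the weighted graph Laplacian with edge weights $|B_{i,j}|\ge0$, i.e.\ $L_{ii}=\sum_{j\in{\cal N}_i}|B_{i,j}|$ and $L_{ij}=-|B_{i,j}|$, and $\beta_i=B_i^{sh}+k_i>0$. Then for any $x\neq0$,
\[
x^T\Psi^\ell x \;=\; -\,x^T L x \;-\; \sum_{i=1}^{n}\beta_i x_i^2
\;=\; -\sum_{(i,j)\in{\cal E}}|B_{i,j}|(x_i-x_j)^2 \;-\; \sum_{i=1}^{n}\beta_i x_i^2 \;<\;0,
\]
since the Laplacian term is nonnegative and every $\beta_i$ is positive. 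I do not expect any real obstacle here: the only points requiring care are (i) remembering that $|B_i|$ carries the shunt term $B_i^{sh}\ge0$ in addition to $\sum_{j\in{\cal N}_i}|B_{i,j}|$, which is exactly what makes the bound strict (equivalently, what makes $\beta_i>0$), and (ii) using the symmetry of $\Psi^\ell$ so that a strictly negative quadratic form is equivalent to negative definiteness.
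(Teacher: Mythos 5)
Your proof is correct and your main (Gershgorin) argument is exactly the paper's approach: the paper simply refers back to the proof of Proposition~\ref{pr2}, which applies the Gershgorin Circle Theorem to the symmetric part and uses $|B_i|=B_i^{sh}+\sum_{j\in{\cal N}_i}|B_{i,j}|$ with $k_i>0$ to get strict negativity, just as you do. Your alternative Laplacian decomposition $\Psi^\ell=-L-\diag(\beta)$ is a valid and somewhat more structural bonus, but not needed.
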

\begin{proof}
The proof follows a similar trend as the proof of Proposition \ref{pr2}.
\end{proof}
\begin{corollary}
System \eqref{eq:netc2} is a stably dissipative Lotka-Volterra system.
\end{corollary}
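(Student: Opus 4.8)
The plan is to exhibit an explicit positive diagonal scaling and then close with a standard openness (continuity) argument, since the hard spectral work has already been done in Proposition \ref{pr6}. First I would check that \eqref{eq:netc2} is genuinely of Lotka--Volterra type in the sense of \eqref{eq:kolb}: it has the form $\dot V = \diag(V)(\Psi^\ell V + b)$ with $b = (k_1 V_1^\ast,\dots,k_n V_n^\ast)^T$, and because $k_i > 0$ and $V_i^\ast > 0$ we have $b \in {\rm int}(\R^n_{+})$, as required; the interaction matrix is $\Psi^\ell$ from \eqref{eq:net4}.

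Next I would establish plain dissipativity by taking $D = I$. By Proposition \ref{pr6}, $\Psi^\ell$ is negative definite whenever all $k_i > 0$, and $\Psi^\ell$ is symmetric, so $\Psi^\ell D = \Psi^\ell = \tfrac12\big(\Psi^\ell D + D(\Psi^\ell)^T\big) \prec 0$; hence the dissipativity condition $\Psi D \leq 0$ holds with $D = I$. (The role of a nontrivial $D$ is to symmetrize a non-symmetric interaction matrix; here it is not needed.)

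For stable dissipativity I would use that negative definiteness is an open property. Let $\tilde\Psi^\ell = \Psi^\ell + \Delta$ be the matrix obtained by perturbing the nonzero entries of $\Psi^\ell$ by amounts $\delta_i > 0$, where $\|\Delta\|$ can be taken as small as desired. For any $x \neq 0$ one has $x^T \tilde\Psi^\ell x = x^T\big(\Psi^\ell + \tfrac12(\Delta + \Delta^T)\big)x$, and the symmetric matrix $\Psi^\ell + \tfrac12(\Delta + \Delta^T)$ is a small symmetric perturbation of the negative definite matrix $\Psi^\ell$; since its eigenvalues depend continuously on its entries, they remain strictly negative once $\|\Delta\|$ is below a threshold fixed by the smallest eigenvalue of $-\Psi^\ell$. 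Thus $x^T \tilde\Psi^\ell x < 0$ for all $x \neq 0$, i.e. $\tilde\Psi^\ell D \leq 0$ with the same $D = I$, so the system stays dissipative under sufficiently small perturbations, which is exactly stable dissipativity.

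I do not anticipate a real obstacle, as Proposition \ref{pr6} already supplies the substantive estimate; the only points needing a little care are that a perturbation of the nonzero entries need not preserve symmetry, so one must pass to the symmetric part before invoking the definiteness/continuity argument, and that one should confirm $b \in {\rm int}(\R^n_{+})$ so that \eqref{eq:netc2} qualifies as a Lotka--Volterra system at all. A sharper variant, if desired, would track the explicit threshold $\min_i(B_i^{sh} + k_i)$ coming from the Gershgorin estimate used in Proposition \ref{pr6}, giving a concrete bound on the admissible perturbation size.
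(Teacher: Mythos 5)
Your proof is correct, and it diverges from the paper's in how stable dissipativity is established. For plain dissipativity you and the paper do the same thing: $\Psi^\ell$ is symmetric and, by Proposition \ref{pr6}, negative definite, so $D=I$ works. For the ``stably'' part, the paper simply invokes Theorem 2.1 of \cite{zhao2010classification}, which guarantees that a dissipative Lotka--Volterra system whose diagonal interaction entries $-(|B_i|+k_i)$ are all strictly negative is stably dissipative. You instead give a self-contained openness argument: since $\Psi^\ell$ is \emph{strictly} negative definite, its smallest eigenvalue of $-\Psi^\ell$ gives a uniform margin, and any sufficiently small perturbation of the nonzero entries (after passing to the symmetric part, as you correctly note) keeps the quadratic form strictly negative, so the same $D=I$ certifies dissipativity of the perturbed matrix. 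Your route is more elementary and avoids the external citation, but it leans essentially on strict definiteness; the cited theorem is the right tool in the more general situation where one only has $\Psi D\leq 0$ in the semidefinite sense and must fall back on the sign of the diagonal. Your preliminary check that $b\in{\rm int}(\R^n_{+})$ (so that \eqref{eq:netc2} is genuinely of the form \eqref{eq:kolb}) is a small point of care the paper leaves implicit. No gaps.
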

\begin{proof}
If $k_i>0$, $\Psi^\ell <0$, hence the system is dissipative. Moreover, since $-(|B_i|+k_i) < 0$, based on Theorem 2.1 of \cite{zhao2010classification}, system \eqref{eq:netc2} is stably dissipative.\\
\end{proof}
Now, let us investigate conditions under which the system is cooperative and provide a sufficient
condition for existence of an equilibrium in ${\rm int}(\R^n_{+})$.
\begin{proposition}\label{pr7}
If  $\forall i: k_i V_i^\ast > 0$, then system \eqref{eq:netc2} is cooperative and there exists an equilibrium point $\bar V$ of system \eqref{eq:netc2} which is unique in ${\rm int}(\R^n_{+})$. In particular, if $B_i^{sh}=0$ and $V^\ast_i=V^\ast$, then $V^\ast$ is the unique equilibrium for \eqref{eq:netc2}.
\end{proposition}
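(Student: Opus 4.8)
The plan is to establish the three claims of Proposition~\ref{pr7} in sequence: cooperativity, existence of an equilibrium in ${\rm int}(\R^n_{+})$, uniqueness there, and finally the explicit identification of $V^\ast$ under the extra hypotheses. Cooperativity is immediate: the off-diagonal entries of $\Psi^\ell$ in \eqref{eq:net4} are the $|B_{i,j}| \geq 0$, so $\Psi^\ell$ is Metzler and Definition~\ref{def:co} applies; no use of $k_i V_i^\ast>0$ is needed for this part. The substantive work is the equilibrium analysis of $\diag(V)(\Psi^\ell V + b)=0$ with $b_i=k_i V_i^\ast>0$.

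For existence and uniqueness in the open positive orthant, the key observation is that on ${\rm int}(\R^n_{+})$ the equilibrium condition reduces to the \emph{linear} system $\Psi^\ell V + b = 0$, i.e.\ $V = -(\Psi^\ell)^{-1} b$. By Proposition~\ref{pr6}, $\Psi^\ell$ is negative definite, hence invertible, so this candidate $\bar V$ is uniquely determined; the only thing to check is that $\bar V \in {\rm int}(\R^n_{+})$. Here I would use that $-\Psi^\ell$ is a symmetric positive definite matrix that is also Metzler-negated, i.e.\ $-\Psi^\ell$ has nonpositive off-diagonal entries and positive definiteness; such a matrix is a (symmetric) nonsingular M-matrix, and the inverse of a nonsingular M-matrix is entrywise nonnegative (in fact entrywise positive when the associated graph is irreducible, which holds since the network graph is connected). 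Therefore $\bar V = (-\Psi^\ell)^{-1} b$ has all components strictly positive because $b \in {\rm int}(\R^n_{+})$. This simultaneously gives existence and uniqueness of the interior equilibrium, since any interior equilibrium must satisfy the same linear equation.

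For the last sentence: if $B_i^{sh}=0$ then $|B_i| = \sum_{j \in {\cal N}_i} |B_{i,j}|$, so the $i$-th row sum of $\Psi^\ell$ equals $-k_i$. Writing this as $\Psi^\ell \1 = -k$, and noting that with $V_i^\ast = V^\ast$ we have $b = V^\ast k$, we get $\Psi^\ell (V^\ast \1) + b = V^\ast(\Psi^\ell \1 + k) = 0$, so $V = V^\ast \1$ solves $\Psi^\ell V + b = 0$ and lies in ${\rm int}(\R^n_{+})$; by the uniqueness just established it is \emph{the} interior equilibrium. I would also remark that the boundary equilibria (where some $V_i=0$) are not excluded by this argument but are irrelevant to the claim as stated, which concerns the interior.

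The main obstacle is the positivity of $\bar V = (-\Psi^\ell)^{-1}b$: one must invoke the M-matrix / inverse-positivity structure rather than just Hurwitzness or negative definiteness, and one should be careful to argue irreducibility (connectedness of the graph) if strict positivity of every component is wanted — otherwise one only gets $\bar V \in \R^n_{+}$ and needs a separate argument (e.g.\ that a zero component would force, via the corresponding equation $\sum_j |B_{i,j}| \bar V_j + b_i = 0$ with all terms $\geq 0$ and $b_i>0$, a contradiction) to push it into the open orthant. I would present this contradiction argument as the clean way to land in ${\rm int}(\R^n_{+})$ without separately establishing irreducibility.
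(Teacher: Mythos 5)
Your proof is correct and follows essentially the same route as the paper's: the Metzler structure of $\Psi^\ell$ gives cooperativity, and inverse-positivity of $-\Psi^\ell$ (which the paper obtains by citing Luenberger's theorem on Metzler--Hurwitz matrices, equivalent to your M-matrix argument) yields $\bar V=-(\Psi^\ell)^{-1}b\in{\rm int}(\R^n_{+})$, with the same row-sum computation identifying $V^\ast\1$ in the special case. Your explicit contradiction argument securing \emph{strict} positivity of every component of $\bar V$ (rather than mere nonnegativity) is a point of rigor the paper glosses over, but it does not change the approach.
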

\begin{proof}
Based on Definition \ref{def:co}, system \eqref{eq:netc2} is cooperative if the interaction matrix $\Psi^\ell$ is Metzler (see Definition \ref{def:met}). Since, $|B_{i,j}| \geq 0 $, then $\Psi^\ell$ is Metzler. Further, based on Theorem 6.5.3 of \cite{luenberger1979introduction}, if $\Psi^\ell$ is Metzler and Hurwitz, then $\Psi^{-\ell}$ is Hurwitz and $- \Psi^{-\ell} > 0$. From Proposition \ref{pr6}, $\{\forall i: k_i> 0\}$, $\Psi^\ell$ is Hurwitz. Therefore, the proof is completed if every element of vector $b$ in \eqref{eq:netc2} is positive, that is $k_i V_i^\ast >0$. Considering the specific case where $B_i^{sh}=0$ and $V^\ast_i=V^\ast$, the proof is straightforward since $|B_i|= \sum_{j \in {\cal N}_i} |B_{i,j}|$ holds.   
\end{proof}
\begin{remark}\label{mon}[Monotonicity of system \eqref{eq:netc2}]: 
The conditions of Proposition \ref{pr7} guarantee that system \eqref{eq:netc2} is cooperative, \ie, $\Psi^\ell$ is Metzler (Definition \ref{def:co}). Hence, the flow of system \eqref{eq:netc2} is monotone, that is given two initial conditions $x_0, y_0 \in {\rm int}(\R^n_{+})$, $x_0 \geq y_0$ (element-wise) implies that $x(t,x_0) \geq x(t,y_0)$ for all $t$. Notice that for linear time-invariant systems, a positive system is also cooperative and monotone, however a nonlinear positive system is not necessarily monotone \cite{angeli2003monotone}. 
\end{remark}
Now, we present a Lyapunov-based stability analysis assuming the existence of a positive equilibrium. Compared to \cite{de2017bregman,jw16}, the following result uses a different Lyapunov function which is defined in ${\rm int}(\R^n_{+})$.
\begin{proposition}\label{pr8}
The unique equilibrium point $\bar V$ for system \eqref{eq:netc2} in ${\rm int}(\R^n_{+})$ is asymptotically stable with the domain of attraction equal to
${\rm int}(\R^n_{+})$.
\end{proposition}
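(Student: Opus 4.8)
The plan is to use a Volterra-type (logarithmic) Lyapunov function, the natural choice for Lotka--Volterra systems and one that lives on the open orthant. Since $\Psi^\ell$ is negative definite by Proposition \ref{pr6}, it is invertible, so the interior equilibrium of Proposition \ref{pr7} is $\bar V = -(\Psi^\ell)^{-1}b \in {\rm int}(\R^n_{+})$, equivalently $b = -\Psi^\ell \bar V$. For $V \in {\rm int}(\R^n_{+})$ I would take
\[
W(V) = \sum_{i=1}^{n} \Big( V_i - \bar V_i - \bar V_i \ln \frac{V_i}{\bar V_i} \Big).
\]
Each summand is a strictly convex function of $V_i > 0$ whose unique minimum, equal to $0$, is attained at $V_i = \bar V_i$; hence $W(V) > 0$ on ${\rm int}(\R^n_{+}) \setminus \{\bar V\}$ and $W(\bar V)=0$. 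Moreover $W(V) \to +\infty$ both as $V$ approaches ${\rm bd}(\R^n_{+})$ (via the $-\bar V_i \ln V_i$ terms) and as $\|V\| \to \infty$ (via the linear terms), so every sublevel set $\{V \in {\rm int}(\R^n_{+}) : W(V) \le c\}$ is a compact subset of ${\rm int}(\R^n_{+})$.

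The second step is the derivative computation. Along trajectories of \eqref{eq:netc2}, using $\partial W/\partial V_i = (V_i - \bar V_i)/V_i$ and $\dot V_i = V_i(\Psi^\ell V + b)_i$, the factors $V_i$ cancel, and with $b = -\Psi^\ell \bar V$ together with the symmetry of $\Psi^\ell$ in \eqref{eq:net4} one gets
\[
\dot W = \sum_{i=1}^{n} (V_i - \bar V_i)(\Psi^\ell V + b)_i = (V-\bar V)^{T} \Psi^\ell (V-\bar V).
\]
By Proposition \ref{pr6} this is strictly negative for every $V \in {\rm int}(\R^n_{+})$ with $V \neq \bar V$, and vanishes only at $V = \bar V$.

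Finally I would assemble the pieces. First, ${\rm int}(\R^n_{+})$ is forward invariant: from $\frac{d}{dt}\ln V_i = (\Psi^\ell V + b)_i$ and the boundedness of the right-hand side on the compact sublevel set $\{W \le W(V(0))\}$, each $V_i(t)$ remains strictly positive, and the same sublevel set keeps $\|V(t)\|$ bounded; hence the solution from any $V(0) \in {\rm int}(\R^n_{+})$ exists for all $t\ge 0$ and stays in the compact invariant set $\{W \le W(V(0))\}$. On that set $W$ is a strict Lyapunov function, so LaSalle's invariance principle forces convergence to the largest invariant subset of $\{\dot W = 0\} = \{\bar V\}$, i.e. $V(t) \to \bar V$. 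Since $W(V(0))$ was arbitrary, the domain of attraction is all of ${\rm int}(\R^n_{+})$, and local asymptotic stability follows from the same $W$ (positive definite about $\bar V$, $\dot W$ negative definite). In the special case $B_i^{sh}=0$, $V_i^\ast = V^\ast$ of Proposition \ref{pr7}, this gives $\bar V = V^\ast \mathbf{1}$ with the same conclusion.

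The only point needing genuine care — the main obstacle — is the properness of $W$ on the open orthant: one must check that its sublevel sets blow up both toward ${\rm bd}(\R^n_{+})$ and at infinity, hence are compact and contained in ${\rm int}(\R^n_{+})$, since this is precisely what upgrades the LaSalle argument from a local statement to the global-on-the-interior claim. The cancellation of $\diag(V)$ in $\dot W$ and the sign of the resulting quadratic form are immediate from Proposition \ref{pr6}.
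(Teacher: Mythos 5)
Your proposal is correct and follows essentially the same route as the paper: the same Volterra-type Lyapunov function $W(V)=\sum_i\bigl(V_i-\bar V_i-\bar V_i\ln(V_i/\bar V_i)\bigr)$, the same cancellation of $\diag(V)$, and the same identity $\dot W=(V-\bar V)^T\Psi^\ell(V-\bar V)<0$ via Proposition \ref{pr6}. In fact your write-up is more complete than the paper's, which stops at $\dot{\cal V}\le 0$; your explicit verification of properness of the sublevel sets (hence forward invariance of the open orthant and global convergence via LaSalle) is exactly the detail needed to justify the claimed domain of attraction.
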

\begin{proof}
Assume $\bar V$ is the unique equilibrium of \eqref{eq:netc2} in ${\rm int}(\R^n_{+})$, that is $\Psi^\ell \bar V+b=0$. Take ${\cal V}= \sum _{i} (V_i-{\bar V}_i)- {\bar V}_i (\ln {V_i}-\ln {\bar V}_i)$ as the Lyapunov candidate. The function $\cal V$ defined on $\R^n_{+}$ has the following properties: 
${\cal V}(0) \rightarrow +\infty$, ${\cal V}(+\infty) \rightarrow +\infty$,
${\cal V} (V) \geq 0$, and ${\cal V}(\bar V)=0$.\\
Let calculate the derivative of $\cal V$ as follows
\be\begin{aligned}
\dot{\cal V}&= \mathbf{1}^T \dot V- {\bar V}^T \diag^{-1}(V) \dot{V}\\
&=\mathbf{1}^T \diag(V) \diag^{-1}(V) \dot V- {\bar V}^T \diag^{-1}(V) \dot{V}\\
&= (V -\bar V)^T \diag^{-1}(V) \dot V\\
&= (V -\bar V)^T (\Psi^\ell V+b).
\end{aligned}\ee
Recall that $\Psi^\ell < 0$. Also, from the definition of the equilibrium, we have $\Psi^\ell \bar V =-b$. Hence, we obtain $$\dot{\cal V}=(V-\bar V)^T \Psi^\ell (V-\bar V) \leq 0$$ which ends the proof.
\end{proof}
\section{Simulation results}\label{sec:sim}
This section presents simulation results for a network of five nodes as in Figure \ref{fig:net}. The initial conditions for the nodal voltages are $V(0)=(1.8,1.6,1.4,1.2,1)^T$. We set the lines' suceptances and conductances as $B_{1,2}= -1.5, B_{1,3}= -1, B_{2,3}= -0.7, B_{3,4}= -1.8, B_{4,5}= -1.2$ and $G_{i,j}= 0.5 |B_{i,j}|$. Shunt susceptances are set to zero. 
\begin{figure}[h]
\centering
\includegraphics[scale=0.52]{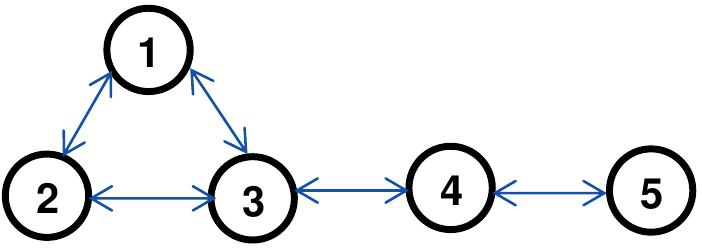}
\caption{Network topology.}\label{fig:net}
\end{figure} 
Figure \ref{fig:osc} shows the result of Proposition \ref{pr4} with $\theta_{i,j}= \theta_{i,j}(0)+ \frac{\pi}{10} \sin(120 t)$ where $\theta(0)=(\frac{\pi}{20},\frac{\pi}{25},\frac{\pi}{30},\frac{\pi}{35},\frac{\pi}{40})^T$. The reference, $V^\ast_i(t)$, is equal to $2+ 0.2 \sin(t)$ for nodes $1,3,5$ and equal to $2+ 0.2 \cos(t)$ for nodes $2,4$. As shown, the time-varying system is bounded.
\begin{figure}[h]
\centering
\includegraphics[scale=0.35]{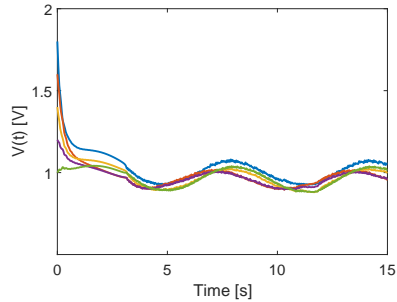}
\caption{The result of Proposition \ref{pr4} with time-varying relative angles and references. As shown the system is bounded.}\label{fig:osc}
\end{figure}
To verify the results of Proposition \ref{pr5}, we replace $k_i, V^\ast_i$ with constant values such that $k_i=5$ and $V^\ast_i=2$. Figure \ref{fig:metz1} shows the evolution of nodal voltages with the controller \eqref{eq:qdr} with constant droop gains and references. As shown, the trajectories are bounded and converging to a ball in the vicinity of the desired equilibrium. 
\begin{figure}[h]
\centering
\includegraphics[scale=0.37]{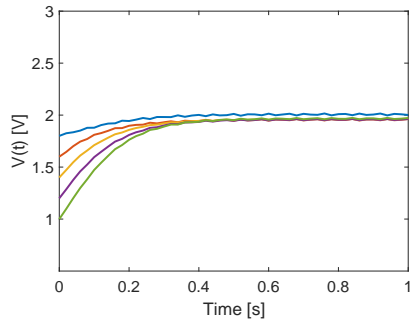}
\caption{Nodal voltages with controllers \eqref{eq:qdr}.}\label{fig:metz1}
\end{figure}
Figure \ref{fig:metz2} shows the result of the case where the controller in \eqref{eq:qdr2} is used (Proposition \ref{pr7}). The line conductances are set to zero and $\theta_{i,j}=0$. Similar to the previous case, $k_i=5$, and $V^\ast_i=2$. The interaction matrix $\Psi^\ell$ is Metzler and Hurwitz. Here, the voltages converge to the reference $V^\ast_i=2$. Also, the results are shown for two sets of initial conditions $V_1(0)=(1.8,1.6,1.4,1.2,1)^T$ and $V_2(0)=(2.8,2.6,2.4,2.2,2)^T$ to show that the system is cooperative and monotone (see Remark \ref{mon}).
\begin{figure}[h]
\centering
\includegraphics[scale=0.37]{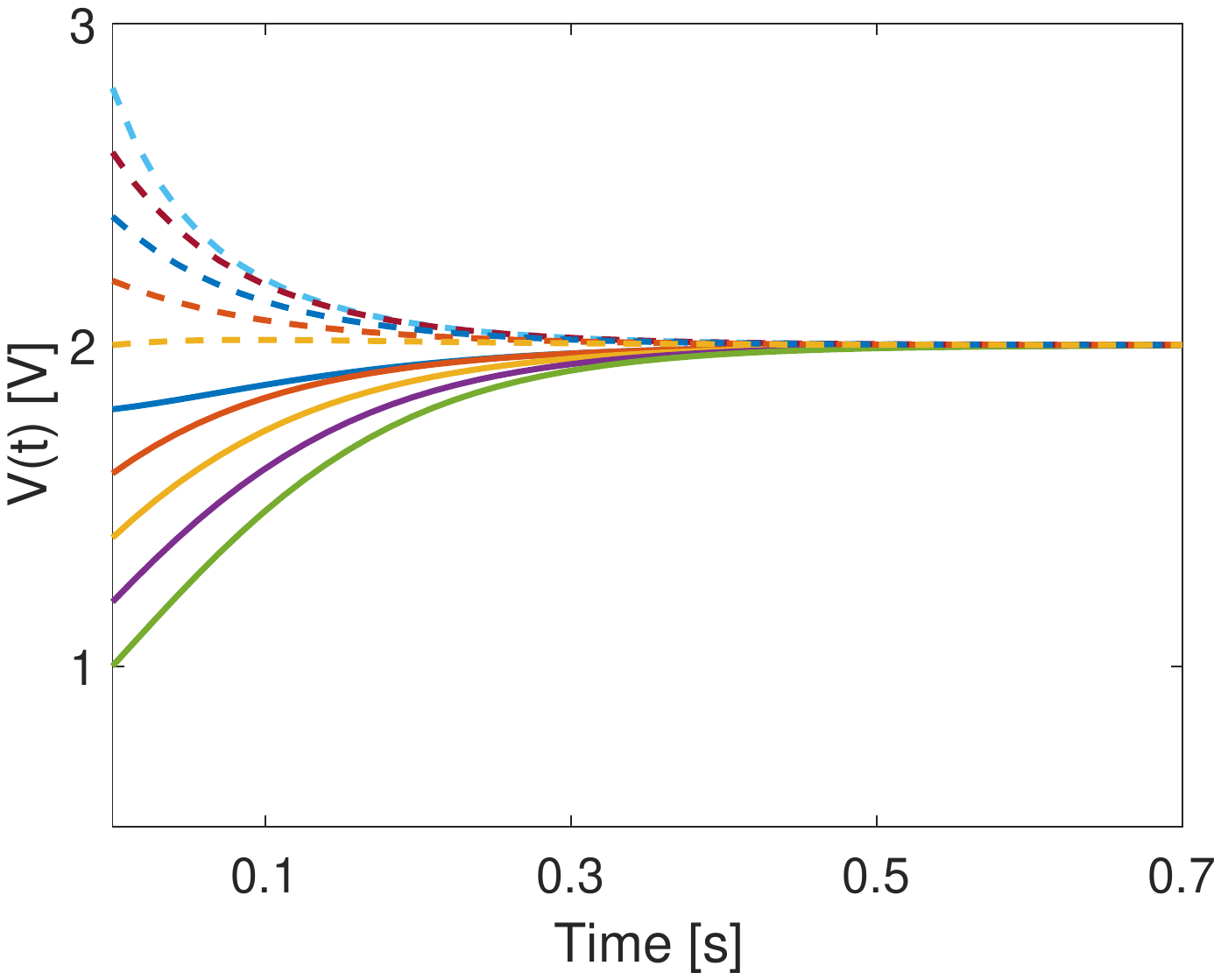}
\caption{Nodal voltages with controllers \eqref{eq:qdr2}. Matrix $\Psi^\ell$ is Metzler and Hurwitz, and the network is cooperative.}\label{fig:metz2}
\end{figure}
\section{Conclusions}\label{sec:con}
This paper has studied the stability of a power network whose nodal voltages are controlled by quadratic droop controllers with injection of AC reactive power. We have shown that the nonlinear voltage dynamics is a positive system in the form of a Lotka-Volterra system and studied its stability. For the lossless network with zero relative angles,
the existence and stability of the unique equilibrium have been proved. For the lossy time-varying network,
we have proved an ultimate uniform boundedness result. Future research avenues include characterizing the ultimate bound for the time-varying system and considering a network with heterogeneous controllers.  
\bibliographystyle{plain}
\bibliography{biblio}
\end{document}